\documentclass[10pt]{llncs}

\usepackage{amsmath}
\usepackage{verbatim}

\begin{document}

\title{Fast Set Intersection and Two-Patterns Matching}
\author{Hagai Cohen \and Ely Porat\thanks{This work was supported by BSF and ISF}}
\institute{Department of Computer Science, Bar-Ilan University, 52900 Ramat-Gan, Israel
\email{\{cohenh5,porately\}@cs.biu.ac.il}}

\maketitle

\begin{abstract}
In this paper we present a new problem, the \emph{fast set intersection} problem,
which is to preprocess a collection of sets in order
to efficiently report the intersection of any two sets in the collection.
In addition we suggest new solutions for
the \emph{two-dimensional substring indexing} problem
and the \emph{document listing} problem for two patterns
by reduction to the \emph{fast set intersection} problem.
\end{abstract}

\section{Introduction and Related Work}
The intersection of large sets is a common problem in the context of retrieval algorithms,
search engines, evaluation of relational queries and more.
Relational databases use indices to decrease query time, but when a query involves two different indices,
each one returning a different set of results, we have to intersect these two sets to get the final answer.
The running time of this task depends on the size of each set,
which can be large and make the query evaluation take longer
even if the number of results is small.
In information retrieval there is a great use of inverted index as a major indexing structure
for mapping a word to the set of documents that contain that word.
Given a word, it is easy to get from the inverted index the set of all the documents that contain that word.
Nevertheless, if we would like to search for two words to get all documents that contain both,
the inverted index doesn't help us that much.
We have to calculate the occurrences set for each word and intersect these two sets.
The problem of intersecting sets finds its motivation also in web search engines
where the dataset is very large.

Various algorithms to improve the problem of intersecting sets have been introduced in the literature.
Demaine et al. \cite{Demaine00adaptiveset} proposed a method
for computing the intersection of $k$ sorted sets using an adaptive algorithm.
Baeza-Yates \cite{Baeza04} proposed an algorithm to improve the multiple searching problem
which is related directly to computing the intersection of two sets.
Barbay et al. \cite{Barbay06fasteradaptive} showed that using
interpolation search improves the performance of adaptive intersection algorithms.
They introduced an intersection algorithm for two sorted sequences that is fast on average.
In addition Bille et al. \cite{Philip07} presented a solution for computing expressions on given sets
involving unions and intersections.
A special case of their result is the intersection of $m$ sets containing $N$ elements in total,
which they solve in expected time $O(N (\log\omega)^{2}/\omega + m \cdot output)$
for word size $\omega$
where $output$ is the number of elements in the intersection.

In this paper we present a new problem, the \emph{fast set intersection} problem.
This problem is to preprocess a databases of size $N$
consisting of a collection of $m$ sets
to answer queries in which we are given two set indices $i,j \leq m$,
and wish to find their intersection.
This problem has lots of applications where there is a need to intersect two sets
in a lot of different fields
like Information Retrieval, Web Searching, Document Indexing, Databases etc.
An optimal solution for this problem will bring better solutions to various applications.

We solve this problem using minimal space and still decrease the query time
by using a preprocessing part.
Our solution is the first non-trivial algorithm for this problem.
We give a solution that requires linear space
with worst case query time bounded by $O(\sqrt{N output} + output)$
where $output$ is the intersection size.

In addition, we present a solution
for the \emph{two-dimensional substring indexing} problem,
introduced by Muthukrishnan et al. \cite{Muthu01}.
In this problem we preprocess a database $D$ of size $N$.
So when given a string pair $(\sigma_{1}, \sigma_{2})$,
we wish to return all the database string pairs $\alpha_{i} \in D$
such that $\sigma_{1}$ is a substring of $\alpha_{i,1}$
and $\sigma_{2}$ is a substring of $\alpha_{i,2}$.
Muthukrishnan et al. suggested a tunable solution for this problem which
uses $O(N^{2-y})$ space for a positive fraction $y$
and query time of $O(N^y+ output)$ where $output$ is the number of such string pairs.
We present a solution for this problem, based on solving the \emph{fast set intersection} problem,
that uses $O(N\log N)$ space
with $O((\sqrt{N\log N output} + output)\log^2 N)$ query time.

In the \emph{document listing} problem which was presented by Muthukrishnan \cite{Muthu02},
we are given a collection of size $N$ of text documents
which may be preprocessed
so when given a pattern $p$ we want to return the set of all the documents that contain that pattern.
Muthukrishnan suggested an optimal solution for this problem
which requires $O(N)$ space with $O(|p| + output)$ query time
where $output$ is the number of documents that contain the pattern.
However, there is no optimal solution
when given a query consists of two patterns $p, q$ to return the set of all the documents that contain them both.
The only known solution for this problem is of Muthukrishnan \cite{Muthu02}
which suggested a solution that uses $O(N\sqrt{N})$ space
which supports queries in time $O(|p| + |q| + \sqrt{N} + output)$.
We present a solution for the \emph{document listing} problem when the query consists of two patterns.
Our solution uses $O(N\log N)$ space
with $O(|p| + |q| + (\sqrt{N\log N output} + output)\log^2 N)$ query time.

The paper is structured as follows:
In Sect.~\ref{sec:FSI Problem} we describe the \emph{fast set intersection} problem.
In Sect.~\ref{sec:FSI Solution} we describe our solution for this problem.
In Sect.~\ref{sec:Intersection-Empty Query and Intersection-Size Query}
we present similar problems with their solutions.
In Sect.~\ref{sec:Two-Dimensional Substring Indexing Solution}
we present our solution for the \emph{two-dimensional substring indexing} problem
and the \emph{document listing} problem for two patterns.
In Sect.~\ref{sec:Conclusions} we present some concluding remarks.

\section{Fast Set Intersection Problem}\label{sec:FSI Problem}
We formally define the fast set intersection (FSI) problem.

\begin{definition}
Let $D$ be a database of size $N$ consisting of a collection of $m$ sets.
Each set has elements drawn from $1\ldots c$.
We want to preprocess $D$ so that given a query of two indices $i,j \leq m$,
we will be able to calculate the intersection between sets $i, j$ efficiently.
\end{definition}

A naive solution for this problem is to store the sets sorted.
Given a query of two sets $i, j$, go over the smaller set and check for each element if it exists in the second set.
This costs $O(min(|i|,|j|)\log{max(|i|,|j|}))$.
This solution can be further improved using hash tables.
A static hash table \cite{Fredman84} can store $n$ elements
using $O(n)$ space and build time, with $O(1)$ query time.
For each set we can build a hash table to check in $O(1)$ time if an element is in the set or not.
This way the query time is reduced to $O(min(|i|,|j|))$ using linear space.
The disadvantage of using this solution is that on the worst case
we go over a lot of elements even if the intersection is small.
A better query time can be gained by using more space
for saving the intersection between every two sets.
Using $O(m^{2}c)$ space we get an optimal query time of $O(output)$
where $output$ is the size of the intersection.
Nevertheless, this solution uses extremely more space.
In the next section we present our solution for the \emph{fast set intersection} problem
which bounds the query time on the worst case.

\section{Fast Set Intersection Solution}\label{sec:FSI Solution}
Here we present our algorithm for solving the FSI problem.
We call \emph{result set} to the output of the algorithm, i.e., the intersection of the two sets.
By $output$ we denote the size of the result set.

\subsection{Preprocessing}
For each set in $D$ we store a hash table to know in $O(1)$ time if an element is in that set or not.
In addition, we store the inverse structure,
i.e., for each element we store a hash table to know in $O(1)$ time if it belongs to a given set or not.

Our main data structure consists of an unbalanced binary tree.
Starting from the root node at level $0$,
each node in that tree handles number of subsets of the original sets from $D$.
The cost of a node in that tree is the sum of the sizes of all the subsets it handles.
The root node handles all the $m$ sets in $D$, therefore, it costs $N$.

\begin{definition}
Let $d$ be a node which costs $n$.
A \emph{large set} in $d$ is a set which has more than $\sqrt{n}$ elements.
\end{definition}

\begin{lemma}\label{lem:LargeSetsNumber}
By definition, a node $d$ which costs $n$,
can handle at most $\sqrt{n}$ large sets.
\end{lemma}

A \emph{set intersection matrix} is a matrix that stores for each set
if it has an intersection with any other set.
For $\grave{m}$ sets this matrix costs $O(\grave{m}^{2})$ bits space
with $O(1)$ query time for answering if set $i$ and set $j$ have a non-empty intersection.

For each node we construct a set intersection matrix for the large sets in that node.
By lemma~\ref{lem:LargeSetsNumber}, saving the set intersection matrix
only for the large sets in a node that costs $n$ space will cost only another $n$ space.

Now we describe how we divide sets between the children of a node.
Only large sets in a node will be propagated down to its two children,
we call them the \emph{propagated group}.
Let $d$ be a node which costs $n$ and let $G$ be its propagated group.
Then, $G$ costs at most $n$ as well.
Let $E$ be the set of all elements in the sets of $G$.
We partition $E$ into two disjoint sets $E_{1},E_{2}$.
For a given set $S \in G$ we partition it between the two children as following:
The left child will handle $S \cap E_{1}$
and the right child will handle $S \cap E_{2}$.
We want each child of $d$ to cost at most $\frac{n}{2}$.
Nevertheless, finding such a partition of $E$ is a hard problem,
if even possible at all.
To overcome this difficulty we shall add elements to $E_{1}$
until adding another element will make the left child cost more than $\frac{n}{2}$.
The next element, which we denote by $e$, will be remarked in $d$ for checking, during query time,
whether it lies in the intersection.
We now take $E_{2} = E - E_{1} - \{e\}$ , i.e., the remaining elements.
This way each child costs at most $\frac{n}{2}$.

A leaf in this binary tree is a node which is in constant size.
Because each node in the tree costs half the space of its parent then this tree has $\log N$ levels.

\begin{theorem}
The space needed for this data structure is $O(N)$ space.
\end{theorem}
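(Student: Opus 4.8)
The plan is to split the total space into three parts and bound each separately: the global hash tables, the set intersection matrices stored at the tree nodes, and the remaining per-node bookkeeping. The first part is immediate: the per-set membership tables occupy $\sum_i |S_i| = N$ words, and the inverse per-element tables occupy $\sum_{e} |\{i : e \in S_i\}| = N$ words by the same double count, so the hash tables contribute $O(N)$ words. The subsets handled by the internal nodes are never stored explicitly; only their matrices and a constant amount of auxiliary data are kept at each node, which is what makes the $O(N)$ bound attainable.

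The central step is a per-level accounting of the matrix space. First I would establish the invariant that the costs of all nodes at any fixed level sum to at most $N$. This follows by induction on the depth: the root costs $N$, and for a node of cost $n$ only its propagated group $G$ (the large sets) descends, with total cost at most $n$; since each $S \in G$ is split into the disjoint parts $S \cap E_1$ and $S \cap E_2$ and the single remarked element $e$ is discarded, the two children together cost at most $n$. Hence the level sums are non-increasing and bounded by $N$.

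Next I would bound the matrices. By Lemma~\ref{lem:LargeSetsNumber} a node of cost $n$ has at most $\sqrt{n}$ large sets, so its set intersection matrix occupies $O((\sqrt{n})^2) = O(n)$ bits. Summing over one level and using the invariant gives $O(N)$ bits per level, and since the tree has $\log N$ levels the matrices occupy $O(N \log N)$ bits in total. The decisive observation is the bit-versus-word conversion: in the word-RAM model with word size $\omega = \Theta(\log N)$, these $O(N \log N)$ bits fit in $O(N \log N / \omega) = O(N)$ words, so the $\log N$ factor from the depth cancels exactly against the $\log N$ bits packed per word.

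Finally I would absorb the auxiliary data. The tree is binary of depth $\log N$, hence has $O(N)$ nodes, and each node stores only $O(1)$ words beyond its matrix (the remarked element $e$ and child pointers), contributing $O(N)$ words. Adding the three parts yields $O(N)$ words overall. The main obstacle is precisely the bit/word bookkeeping in the third step: storing the matrices as plain words would give $O(N \log N)$, so the argument must keep the matrices bit-packed and rely on $\omega = \Omega(\log N)$, while the per-level cost invariant must be proven carefully since it is what caps the bits per level at $O(N)$.
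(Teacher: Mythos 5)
Your proposal is correct and follows essentially the same route as the paper: charge $O(N)$ words to the hash tables, show each tree level carries at most $O(N)$ bits of matrix data because children never cost more than their parent, sum over the $\log N$ levels to get $O(N\log N)$ bits, and convert to $O(N)$ words via the word size. You spell out the level-sum induction and the auxiliary per-node data more explicitly than the paper does, but these are elaborations of the same argument rather than a different approach.
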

\begin{proof}
The hash tables for all the sets cost $O(N)$ space.
As well the inverse hash tables for all the elements cost $O(N)$ space.

The binary tree structure space cost is as follows:
The root costs $O(N)$ bits for saving the set intersection matrix.
In each level we store only another $O(N)$ bits because every two children don't cost more than their parent.
Hence, the total cost of this tree structure is $O(N\log N)$ bits
which is $O(N)$ space in term of words.
\qed
\end{proof}

\subsection{Query Answering}
Given sets $i, j$ (without loss of generality we assume $|i| \leq |j|$),
we start traversing the tree from the root node.
If $i$ is not a large set in the root we check each element from it in the hash table of $j$.
As there can be at most $\sqrt{N}$ elements in $i$ because it is not a large set,
this will cost $O(\sqrt{N})$.
If both $i, j$ are large sets we do as follows:
We check in the set intersection matrix of the root wether there is a non-empty intersection between $i$ and $j$.
If there is not
there is nothing to add to the result set so we stop traversing down.
If there is an intersection
we check the hash table of the element which is remarked in that node if it belongs to $i$ and $j$
and add that element to the intersection if it belongs to both.
Next we go down to the children of the root and continue the traversing recursively.

Elements are added to the result set when we get to a node
which in that node $i$ is not a large set.
In this case, we stop traversing down the tree from that node.
Instead we step over all the elements of $i$ in that node checking for each one of them
if it belongs to $j$.
We call such a node a \emph{stopper node}.

\begin{theorem}
The query time is bounded by $O(\sqrt{N output}+ output)$.
\end{theorem}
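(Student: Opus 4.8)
The plan is to charge the running time to three sources: the $O(1)$ work done at each internal node we recurse through, the $O(\sqrt{n})$ work done at each \emph{stopper node} of cost $n$, and the reported elements themselves. First I would record the per-node costs. At an internal node we perform only a constant number of operations, namely one lookup in the set intersection matrix and one lookup of the remarked element $e$ in the hash tables of $i$ and $j$. At a stopper node of cost $n$, by the definition of \emph{large set} and Lemma~\ref{lem:LargeSetsNumber} the set $i$ is not large there, so it contributes at most $\sqrt{n}$ elements; scanning them and testing each against the hash table of $j$ costs $O(\sqrt{n})$.

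Next I would pin down the shape of the visited subtree. Recursion proceeds from a node only when both $i$ and $j$ are large there and the matrix reports a non-empty intersection, so every node at which we recurse contains at least one element of the result set in its handled range. Since the elements are partitioned among the children as we descend — each element is passed to exactly one child, apart from the single remarked element removed at each node — the handled ranges of distinct nodes on the same level are disjoint. I would use this disjointness together with Lemma~\ref{lem:LargeSetsNumber} to bound, level by level, both the number of nodes we recurse into and the number of stopper nodes reached: a node on level $\ell$ has cost about $N/2^{\ell}$, and the requirement that $i$ be large there, combined with disjointness, limits how many such nodes can coexist on one level.

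I would then assemble the stopper cost. Each stopper node of cost $n$ satisfies $\sqrt{n}\le\sqrt{N}$, and the stopper nodes are pairwise incomparable in the tree, so their $i$-ranges are disjoint. Summing $\sqrt{n}$ over the stoppers and substituting the per-level counts from the previous step, I would balance the two competing effects — shallow stoppers are expensive ($\Theta(\sqrt{N})$ each) but few, because the large-set constraint forces their ancestors to hold many elements, whereas deep stoppers are cheap but numerous — and show that the total collapses to $O(\sqrt{N\cdot output})$. Adding the $O(1)$ per recursed node, bounded by the same counting, and the $output$ elements reported at stopper nodes then yields the claimed $O(\sqrt{N\cdot output}+output)$.

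The hard part will be the balancing in this last step. The naive per-level bounds give either $O(output\cdot\sqrt{N})$ or a bound in terms of $|i|$ carrying an extra $\log N$ factor, and neither is by itself $O(\sqrt{N\cdot output})$. Making the argument tight requires simultaneously exploiting that the number of recursion branches on level $\ell$ is at most $output$ (disjoint result elements) and at most $O(|i|/\sqrt{N/2^{\ell}})$ (disjoint large sets), and using whichever dominates in each level regime so that the geometric sum over the $\log N$ levels telescopes to a single $\sqrt{N\cdot output}$ term rather than accumulating a logarithmic factor.
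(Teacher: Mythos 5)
Your decomposition (constant work per recursed node, $O(\sqrt{n})$ per stopper of cost $n$, plus the reported elements) and your structural observations (recursion only into nodes whose handled range witnesses an intersection element, disjointness of handled ranges across a level, stoppers forming an antichain) are all sound and are exactly what a correct proof needs. The gap is precisely where you flag it, and the specific pair of per-level bounds you propose to balance does not close it. If the number of recursion branches on level $\ell$ is bounded by $\min(output,\ |i|/\sqrt{N/2^{\ell}})$, the stopper cost becomes $\sum_{\ell}\min\bigl(output\sqrt{N}\,2^{-\ell/2},\ O(|i|)\bigr)$; every level below the crossover contributes $\Theta(|i|)$, and there can be $\Theta(\log N)$ such levels, so this route only yields $O(|i|\log(\cdot))$ --- take $|i|\approx\sqrt{N\cdot output}$ to see it overshoot the target by a logarithmic factor. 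The large-set count $|i|/\sqrt{N/2^{\ell}}$ is the wrong partner for the $output$ bound.

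The missing ingredient is the trivial cap of $2^{\ell}$ nodes on level $\ell$. With at most $\min(2^{\ell},\ O(output))$ stoppers per level, the stopper cost is $\sum_{\ell}\min\bigl(\sqrt{N}2^{\ell/2},\ O(output)\sqrt{N}2^{-\ell/2}\bigr)$, two geometric series meeting at $2^{\ell}\approx output$ with peak $O(\sqrt{N\cdot output})$; the same $\min(2^{\ell},output)$ count bounds the $O(1)$-per-node traversal work by $O(output+output\log\frac{N}{output})=O(output+\sqrt{N\cdot output})$, whereas your plan gives only $O(output\log N)$ there, which is again too weak when $output$ is close to $N$. The paper packages the stopper sum differently: writing $l_i$ for the level of the $i$-th of $x$ stoppers, it applies the Cauchy--Schwarz inequality to get $\sum_i\sqrt{N/2^{l_i}}\le\sqrt{N}\sqrt{x}\sqrt{\sum_i 2^{-l_i}}$ and then Kraft's inequality over the antichain of stoppers to bound $\sum_i 2^{-l_i}\le 1$, leaving $\sqrt{Nx}$ together with a separate count of $x$. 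Either device --- the $2^{\ell}$ per-level cap or Kraft's inequality --- is what converts your ``few but expensive versus many but cheap'' intuition into an actual $O(\sqrt{N\cdot output})$ bound; the two bounds you named do not.
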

\begin{proof}
The query computation consists of two parts.
The tree traversal part and the time we spend on stopper nodes.

There are $output$ elements in the result set, therefore, there can be at most $O(output)$ stopper nodes.
Because the tree height is $\log N$,
for each stopper node we visit at most $\log N$ nodes for the tree traversal until we get to it.
Therefore, the tree traversal part adds at most $O(output\log N)$ to the query time.
But this is more than what we actually pay for the tree traversal
because some stopper nodes share their path from the root.
This can be bounded better.
Because the tree is a binary tree if we fully traverse the tree till $\log output$ height
it will cost $O(output)$ time.
Now, from this height if we continue traverse the tree
we visit for each stopper node at most $\log N - \log output$ nodes because we are already at $\log output$ height.
Thus, the tree traversal part is bounded by $O(output + output (\log N - \log output))$.
By log rules this equals to $O(output + output\log \frac{N}{output})$.

Now, we calculate how much time we spent on all the stopper nodes.
A stopper node is a node which during the tree traversal we have to go over all elements of a non-large set in that node.
The size of a non-large set in a stopper at level $l$ is $\sqrt{\frac{N}{2^l}}$.
Consider there are $x$ stopper nodes.
We denote by $l_i$ the level for stopper node $i$.
For all stopper nodes we pay at most:
\begin{align*}
\sum_{i=1}^x \sqrt{\frac{N}{2^{l_i}}} = \sqrt{N} \sum_{i=1}^x 2^{-\frac{1}{2}l_i}
= \sqrt{N} \sum_{i=1}^x 1 \cdot 2^{-\frac{1}{2}l_i}
\end{align*}
The Cauchy-Schwarz inequality is that $(\sum_{i=1}^n x_i y_i)^2 \leq (\sum_{i=1}^n x_i^2)(\sum_{i=1}^n y_i^2)$.
We use it in our case to get:
\begin{align*}
&\leq \sqrt{N} \sqrt{\sum_{i=1}^x 1^2} \sqrt{\sum_{i=1}^x (2^{-\frac{1}{2}l_i})^2} \\
&= \sqrt{N} \sqrt{x} \sqrt{\sum_{i=1}^x 2^{-l_i}}
\end{align*}
Kraft inequality from Information Theory states that for any binary tree:
\begin{align*}
\sum_{l \in leaves} 2^{-depth(l)} \leq 1
\end{align*}
Because we never visit a subtree rooted by a stopper node,
then in our case each stopper node can be viewed as a leaf in the binary tree.
Therefore, we can transform Kraft inequality for all the stopper nodes
instead of all tree leaves to get that
$\sum_{i=1}^x 2^{-l_i} \leq 1$.
Using this inequality gives us that:
\begin{align*}
\leq \sqrt{N} \sqrt{x} = \sqrt{Nx} \leq \sqrt{N output} = output\sqrt{\frac{N}{output}}
\end{align*}
Thus, we pay $O(output\sqrt{\frac{N}{output}})$, for the time we spend in the stopper nodes.

Therefore, the tree traversal part and the time we spend on all stopper nodes is
$O(output + output\log \frac{N}{output} + output\sqrt{\frac{N}{output}})$.
Hence, the final query time is bounded by $O(\sqrt{N output}+ output)$.
\qed
\end{proof}

%

\begin{corollary}
The fast set intersection problem can be solved in linear space with worst case query time of $O(\sqrt{N output}+ output)$.
\end{corollary}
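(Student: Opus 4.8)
The plan is to assemble the corollary directly from the two preceding theorems, since together they already certify exactly the two properties claimed. First I would invoke the space theorem to establish that the entire data structure occupies $O(N)$ words: the per-set hash tables and the inverse per-element hash tables each cost $O(N)$ space, and the unbalanced binary tree contributes $O(N)$ bits per level across its $\log N$ levels, for a total of $O(N\log N)$ bits, which is $O(N)$ words. The point worth re-emphasizing is that the linear-space bound applies to the whole structure simultaneously, not merely to one component, and that the halving of cost between a node and its two children is what keeps the per-level contribution at $O(N)$ bits.

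Next I would invoke the query-time theorem to conclude that any query on indices $i,j \le m$ is answered in $O(\sqrt{N\,output}+output)$ time. Here I would simply recall that the cost splits into the traversal part, bounded by $O(output + output\log\frac{N}{output})$ by charging against the at most $O(output)$ stopper nodes and exploiting that the tree is binary below height $\log output$, and the stopper-node part, bounded by $O(output\sqrt{N/output})$ via Cauchy--Schwarz together with the Kraft inequality (each stopper node behaving as a leaf since we never descend into its subtree). Summing these and absorbing the dominated terms yields the stated bound.

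The one item not packaged as a separate theorem, and hence the part I would take care to address explicitly, is the \emph{correctness} of the reporting routine: that it outputs precisely the intersection. I would argue that at each internal node the element set is split disjointly, with the left child handling $S\cap E_1$, the right child handling $S\cap E_2$, and the single remarked element $e$ tested in place, so no common element is lost and none is double-counted; every common element is therefore either caught as a remarked element at a node where both sets are large or discovered by the exhaustive scan at the unique stopper node it reaches. I do not expect a genuine obstacle here, since the corollary is essentially a packaging statement; the only real care needed is to confirm that the space and time theorems refer to the same structure and that this disjoint-partition invariant indeed guarantees exact reporting. Combining the linear-space bound, the worst-case time bound, and this correctness observation gives the result.
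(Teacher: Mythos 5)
Your proposal is correct and matches the paper's (implicit) argument: the corollary is simply the conjunction of the preceding space theorem and query-time theorem, which you invoke and summarize faithfully. The extra correctness observation about the disjoint partition $E_1, E_2, \{e\}$ is a reasonable addition but does not change the route.
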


\section{Intersection-Empty Query and Intersection-Size Query}\label{sec:Intersection-Empty Query and Intersection-Size Query}
In the FSI problem given a query we want to return
the result set, i.e., the intersection between two sets.
What if we only want to know if there is any intersection between two sets?
We call that the \emph{intersection-empty query} problem.
Moreover, sometimes we would like only to know the size of the intersection
without calculating the actual result set.
We define these problems as follows:
\begin{definition}
Let $D$ be a database of size $N$ consisting of a collection of $m$ sets.
Each set has elements drawn from $1\ldots c$.
The \emph{intersection-empty query} problem is to preprocess $D$ so that given a query of two indices $i,j \leq m$,
we want to calculate if sets $i, j$ have any intersection.
In the \emph{intersection-size query} problem when given a query we want to calculate the size of the result set.
\end{definition}


A naive solution for the intersection-empty query problem
is to build a matrix saving if there is any intersection between every two sets.
This solution uses $O(m^2)$ bits space with query time of $O(1)$.
For the intersection-size query problem we store the intersection size for every two sets
by using slightly more space, $O(m^2)$ space, with query time of $O(1)$.

We can use part of our FSI solution method to solve the intersection-empty query problem
using $O(N)$ space with $O(\sqrt{N})$ query time.
Instead of the whole tree structure we store only the root node with its set intersection matrix using $O(N)$ space.
Given sets $i, j$ (without loss of generality let's assume $|i| \leq |j|$),
if $i$ is not large set in the root we check each element from it in the hash table of $j$.
Because $i$ is not large set, this will cost at most $O(\sqrt{N})$ time.
If $i$ is a large set then we check in the set intersection matrix of the root
to see if there is any intersection in $O(1)$ time.
Hence, we can solve the intersection-empty query problem in $O(\sqrt{N})$ time using $O(N)$ space.

With the same method we can solve the intersection-size query problem
by saving the size of the intersection instead of saving if there is any intersection in the set intersection matrix.
This way we can solve the intersection-size query problem in $O(\sqrt{N})$ time using $O(N)$ space.

%
%
%

\section{Two-Dimensional Substring Indexing Solution}\label{sec:Two-Dimensional Substring Indexing Solution}
In this section,
we show how to solve the \emph{two-dimensional substring indexing} problem
and the \emph{document listing} problem for two patterns
using our FSI solution.
The \emph{two-dimensional substring indexing} problem was showed by Muthukrishnan et al. \cite{Muthu01}.
It is defined as follows:

\begin{definition}
Let $D$ be a database consisting of a collection of string pairs
$\alpha_{i} = (\alpha_{i,1},\alpha_{i,2}), 1\leq i \leq c$, which may be preprocessed.
Given a query string pair $(\sigma_{1}, \sigma_{2})$,
the 2-d substring indexing problem is to identify all string pairs $\alpha_{i} \in D$,
such that $\sigma_{i}$ is a substring of $\alpha_{i,1}$ and $\sigma_{2}$ is a substring of $\alpha_{i,2}$.
\end{definition}

Muthukrishnan et al. \cite{Muthu01} reduced the \emph{two-dimensional substring indexing} problem
to the \emph{common colors query} problem which is defined as follows:

\begin{definition}
We are given an array $A[1 \ldots N]$ of colors drawn from $1 \ldots C$.
We want to preprocess this array so that the following query can be answered efficiently:
Given two non-overlapping intervals $I_{1}, I_{2}$ in $[1,N]$,
list the distinct colors that occur in both intervals $I_{1}$ and $I_{2}$.
\end{definition}

The common colors query (CCQ) problem is another intersection problem
where we have to intersect two intervals on the same array.
We now show how to solve the CCQ problem by solving the FSI problem.
By that we solve the \emph{two-dimensional substring indexing} problem as well.

Given array $A$ of size $N$,
we build a data structure consisting of $\log N$ levels over this array.
In the top level we partition $A$ into two sets of size at most $\frac{N}{2}$,
the first set containing colors, i.e., elements, of $A$ in range $A[1 \ldots \frac{N}{2}]$
and the second set containing colors in range $A[\frac{N}{2}+1 \ldots N]$.
As well, each level $i$ is partitioned into $2^i$ sets,
each respectively, containing a successive set of $\frac{N}{i}$ colors from $A$.
The bottom level, in similar fashion,
is therefore partitioned into $N$ sets each containing one different color from array $A$.
The size of all the sets in each level is $O(N)$.
Therefore, the size needed for all the sets in all levels is $O(N\log N)$.

\begin{lemma}\label{lem:IntervalCovering}
An interval $I$ on $A$ can be covered by at most $2\log N$ sets.
\end{lemma}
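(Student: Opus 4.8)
The plan is to recognize the data structure as a balanced binary segment tree over the positions of $A$: level $i$ consists of $2^i$ pairwise-disjoint contiguous blocks, each of size $N/2^i$, and each block is exactly one of the "sets" stored at that level; the two children of a block are its left and right halves, and the leaves are the singleton positions. A \emph{covering} of $I$ is a selection of disjoint blocks whose union is exactly $I$, so it suffices to exhibit such a selection using at most $2\log N$ blocks. I would produce it by the canonical top-down decomposition: starting at the root, for a node with range $R$, if $R \subseteq I$ select it and stop; if $R \cap I = \emptyset$ stop; otherwise recurse into both children. The selected blocks are automatically disjoint and their union is exactly $I$, so the only thing left to control is their number.

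The key step is a per-level bound. Call a node \emph{straddling} if $R \cap I \neq \emptyset$ but $R \not\subseteq I$; these are precisely the nodes the procedure recurses through. Writing $I = [a,b]$, a node straddles $I$ exactly when its range contains the left boundary of $I$ (it meets $[a,b]$ yet extends left of $a$) or the right boundary (it meets $[a,b]$ yet extends right of $b$). Since the blocks at a fixed level are pairwise-disjoint intervals, at most one block at that level can contain the left boundary and at most one can contain the right boundary, so at each level there are at most two straddling nodes. Each straddling node recurses into its two children, and among its children at most one is fully contained in $I$ and hence added to the cover (the child lying on the interior side of the relevant endpoint), while at most one child straddles $I$ at the next level. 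Thus at most two blocks are added to the cover at each of the $\log N$ levels, giving at most $2\log N$ blocks in total.

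I expect the main obstacle to be making the per-level accounting fully rigorous rather than the overall strategy. Concretely, I would need to verify carefully that the two straddling-node slots (one per endpoint of $I$) never generate more than one newly-covered child each, and to handle the boundary cases where an endpoint of $I$ coincides with a block boundary, in which case a node becomes fully contained in $I$ (and is selected immediately) instead of straddling, so that it must not be double-counted against its parent. A secondary nuisance is the rounding of the block sizes $N/2^i$ when $N$ is not a power of two, which slightly perturbs where the children split but does not affect the disjointness-of-ranges argument that underlies the two-per-level bound. Once these cases are dispatched, summing two blocks over the $\log N$ levels yields the claimed bound of $2\log N$.
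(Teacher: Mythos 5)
Your proof is correct, but it takes a genuinely different route from the paper's. You give the standard constructive segment-tree decomposition: recurse from the root, observe that at each level at most two blocks \emph{straddle} an endpoint of $I$ (blocks at a level are disjoint intervals, and a block meeting $I$ but not contained in it must contain $a$ or $b$), and that a straddling block contributes at most one fully-contained child to the cover (if both children were contained, the parent would be too), giving at most two selected blocks per level and $2\log N$ in total. The paper instead argues by contradiction with an exchange step: if a cover needed more than $2\log N$ sets, some level would contain at least three consecutive selected blocks, two of which must be siblings and can be replaced by their common parent, shrinking the cover. Your version is longer but constructive, and it is arguably the more useful one here: the query algorithm of Theorem~\ref{thm:CCQByFSI} actually has to \emph{find} the covering sets for $I_1$ and $I_2$, and your top-down procedure does exactly that in $O(\log N)$ time, whereas the paper's proof only bounds the size of a minimal cover without exhibiting how to compute it. The paper's argument is shorter and applies to any irredundant cover, but is slightly loose in how the pigeonhole step forces three blocks on a single level. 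Your flagged boundary cases (an endpoint coinciding with a block boundary, $N$ not a power of two) are handled correctly by your disjointness argument and do not create a gap.
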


\begin{proof}
Assume, by contradiction, that there exists an interval for which at least $m > 2\log N$ sets are needed.
This implies that there is some level that at least $3$ (consecutive) sets are selected.
However, for every $2$ consecutive sets there have to be a set in the upper level
that contains them both, so we can take it instead,
and cover the same interval with only $m-1$ sets,
in contradiction to the assumption that at least $m$ sets are required for the cover.
\qed
\end{proof}

\begin{theorem}\label{thm:CCQByFSI}
The CCQ problem can be solved
using $O(N\log N)$ space with $O((\sqrt{N\log N output} + output)\log^2 N)$ query time
where $output$ is the number of distinct colors that occur in both $I_{1}$ and $I_{2}$.
\end{theorem}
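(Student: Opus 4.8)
The plan is to solve the CCQ problem by reducing it to the FSI problem. The data structure described just before the theorem gives us, at each of the $\log N$ levels, a partition of $A$ into sets, where a set at level $i$ holds the distinct colors appearing in a contiguous block of $A$. The crucial observation from Lemma~\ref{lem:IntervalCovering} is that any interval $I$ on $A$ can be covered by at most $2\log N$ of these precomputed sets. So first I would take the query intervals $I_1, I_2$ and cover each of them by $O(\log N)$ sets drawn from the level structure; call these covers $\mathcal{S}_1$ and $\mathcal{S}_2$. A color occurs in both $I_1$ and $I_2$ if and only if it occurs in some set of $\mathcal{S}_1$ and in some set of $\mathcal{S}_2$. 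Hence the set of common colors is exactly $\bigcup_{S \in \mathcal{S}_1, T \in \mathcal{S}_2} (S \cap T)$, which reduces the CCQ query to computing $O(\log N) \times O(\log N) = O(\log^2 N)$ pairwise set intersections using our FSI data structure.

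For the space bound, I would first note that the collection of all sets across all levels has total size $O(N\log N)$, as argued in the paragraph preceding the theorem. Building the FSI data structure of the earlier sections on top of this collection of sets costs space linear in the total size of the sets, by the Corollary to the linear-space theorem. This gives $O(N\log N)$ space overall, matching the claim.

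For the query time, I would bound the cost of the $O(\log^2 N)$ individual FSI queries. Each FSI query on two sets whose intersection has size $o$ runs in time $O(\sqrt{N'o} + o)$, where $N' = O(N\log N)$ is the total size of the underlying collection. Summing $O(\sqrt{N\log N\, o_k} + o_k)$ over the $O(\log^2 N)$ pairs, and observing that the total output across all pairs is bounded by (a constant times) $output$ — since each common color contributes to the intersection count — yields a bound of the form $O((\sqrt{N\log N\, output} + output)\log^2 N)$. The main obstacle I expect is handling the accounting of the $output$ term carefully: a single common color can be reported by several of the $O(\log^2 N)$ pairwise intersections (since the covers overlap across different pairs $(S,T)$), so I must be careful to show that the $\sum_k o_k$ stays within a $\log^2 N$ factor of the true $output$, and that applying the per-query bound $\sqrt{N\log N\, o_k}$ to each of the $\log^2 N$ pairs aggregates to the claimed $\sqrt{N\log N\, output}\,\log^2 N$ rather than something larger. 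This requires bounding each $o_k \le output$ and using the $O(\log^2 N)$ count of queries as the multiplicative factor, after which the stated time follows.
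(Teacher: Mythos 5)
Your proposal follows essentially the same route as the paper: cover each query interval by $O(\log N)$ precomputed sets via Lemma~\ref{lem:IntervalCovering}, intersect all $O(\log^2 N)$ pairs with the FSI structure built over the $O(N\log N)$-size collection, and multiply the per-query FSI bound by the number of pairs. Your extra care in bounding each pairwise output by $output$ and charging the $\log^2 N$ factor multiplicatively is sound and in fact slightly more explicit than the paper's own accounting.
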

\begin{proof}
Given two intervals $I_{1},I_{2}$ we want to calculate their intersection,
By lemma~\ref{lem:IntervalCovering}, $I_{1}, I_{2}$ are each covered by a group of $2\log n$ sets at the most.
To get the intersection of $I_{1},I_{2}$ we will take each set from the first group
and intersect it with each set from the second group using our FSI solution.
Hence, we have to solve the FSI problem $O(\log^2 N)$ times.
Our FSI solution takes $O(\sqrt{N output} + output)$ time and $O(N)$ space for dataset which costs $O(N)$ space.
Here the dataset costs $O(N\log N)$ space,
therefore, we can solve the common colors query problem
in $O((\sqrt{N\log N output} + output)\log^2 N)$ time using $O(N\log N)$ space.
\qed
\end{proof}

As showed in \cite{Muthu01} to solve the two-dimensional substring problem we can solve a CCQ problem.
As a result, the two-dimensional substring problem can be solved
in $O((\sqrt{N\log N output} + output)\log^2 N)$ time using $O(N\log N)$ space.

\subsection{Document Listing Solution For Two Patterns}
The document listing problem was presented by Muthukrishnan \cite{Muthu02}.
In this problem we are given a collection $D$ of text documents $d_{1}, \ldots, d_{c}$,
with $\sum_{i}|d_{i}| = N$, which may be preprocessed,
so when given a query comprising of a pattern $p$
our goal is to return the set of all documents that contain one or more copies of $p$.
Muthukrishnan presented an optimal solution for this problem by building a suffix tree for $D$,
searching the suffix tree for $p$ and getting an interval $I$ on an array with all the occurrences of $p$ in $D$.
Then they solve the colored range query problem on $I$ to get each document only once.
This solution requires $O(N)$ space with optimal query time of $O(|p| + output)$
where $output$ is the number of documents that contain $p$.

We are interested in solving this problem for a two patterns query.
Given two patterns $p, q$,
our goal is to return the set of all documents that contain both $p$ and $q$.
In \cite{Muthu02} there is a solution that uses $O(N\sqrt{N})$ space with $O(|p| + |q| + \sqrt{N} + output)$ query time.
Their solution is based on searching a suffix tree of all the documents
for the two patterns $p, q$ in $O(|p| + |q|)$ time.
From this they get two intervals:
$I_{1}$ with $p$ occurrences and $I_{2}$ with $q$ occurrences..
On these intervals they solve a CCQ problem to get the intersection
between $I_{1}$ and $I_{2}$ for all the documents that contain both $p$ and $q$.

We suggest a new solution based on solving the FSI problem.
We use the same method as Muthukrishnan \cite{Muthu02}
until we get the two intervals:
$I_{1}$ with $p$ occurrences and $I_{2}$ with $q$ occurrences.
Now, we have to solve a CCQ problem which can be solved as shown above in theorem~\ref{thm:CCQByFSI}.
Therefore, the document listing problem for two patterns can be solved
in $O(|p| + |q| + (\sqrt{N\log N output} + output)\log^2 N)$ time using $O(N\log N)$ space
where $output$ is the number of documents that contain both $p$ and $q$.

\section{Conclusions}\label{sec:Conclusions}
In this paper we developed a method to improve algorithms which intersects sets as a common task.
We solved the fast set intersection problem
using $O(N)$ space with query time bounded by $O(\sqrt{N output} + output)$.
We showed how to improve some other problems,
the two-dimensional substring indexing problem and the document listing problem for two patterns,
using the fast set intersection problem.

There is still a lot of research to be done in regards to the fast set intersection problem.
It is open if the query time can be bounded better.
Moreover, we showed only two applications for the fast set intersection problem.
We are sure that the fast set intersection problem can be useful in other fields as well.

\bibliographystyle{splncs}
\bibliography{two_queryBib}

\end{document}